\newcounter{thm}
\newtheorem{Theorem}[thm]{Theorem}
\newtheorem{Corollary}{Corollary}[thm]
\newcommand*{\permcomb}[4][0mu]{{{}^{#3}\mkern#1#2_{#4}}}
\newcommand*{\comb}[1][-1mu]{\permcomb[#1]{C}}
\title{ON SIZE BIASED KUMARASWAMY DISTRIBUTION}
\author{Dreamlee Sharma $^{1,}$* and Tapan Kumar Chakrabarty $^{1}$ \\ $^{1}$\footnotesize{\textit{Department of Statistics, North-Eastern Hill University, Shillong - 793022, Meghalaya, India}} \\ \small{Contact: *dreamleesharma@yahoo.in, tapankumarchakrabarty@gmail.com}}
\begin{document}
\maketitle

\abstract{In this paper, we introduce and study the size-biased form of Kumaraswamy distribution. The Kumaraswamy distribution  which has drawn considerable attention in hydrology and related areas was proposed by Kumarswamy \cite{Kum80}. The new distribution is derived under size-biased probability of sampling taking the weights as the variate values. Various distributional and characterizing properties of the model are studied. The methods of maximum likelihood and matching quantiles estimation are employed to estimate the parameters of the proposed model. Finally, we apply the proposed model to simulated and real data sets.}

\noindent \textit{\textbf{Key Words}:} Kumaraswamy distribution; size-biased distribution; quantile function; regularized beta function. \\
\textbf{AMS 2010 subject classifications:} 60E05, 62F10

\section{Introduction}

The concept of weighted distribution was first introduced by Fisher \cite{Fis34} to model ascertainment bias, and was later formalized in a unifying theory by Rao \cite{65Rao}. Let $X$ be a random variable of interest such that $X \sim f(x;\theta)$, where $\theta$ is a vector of parameters. Under equal probability sampling, the estimation of the parameter $\theta$ can be made with an abundance of methods. However, under size-biased schemes, the probability of sampling an individual is proportional to $X^r$ provided that $ E_{\theta}(X^r)<\infty $ for all $\theta$. In situations like this, the weighted probability density function is defined as
\begin{equation}\label{wdist}
f_r(x,\theta)= \frac{x^r f(x,\theta)}{\mu_r^{'}}
\end{equation}
where \\
$${\mu_r}^{'}=\int {x^r f(x;\theta)dx}$$ \\
in place of $f(x;\theta)$ can be used. The weighted distributions have varieties of uses in various fields. A number of papers have appeared implicitly using the concepts of weighted and size-biased sampling distributions. Patil and Rao \cite{77PatR} have briefly surveyed the applications of weighted and size-biased distributions. Size-biased distributions arise naturally in a range of sampling and modeling problems in forestry \cite{03Gov}. They also occur in applications spanning domains including environmental sciences, econometrics, human demography and biomedical sciences \cite{ PatR78, She72}. To have an idea of their applications, one can refer to, \cite{DenP84, DucG15,  LapB87, Maget99, Taiet95, Van86, 75War, Yuet12}.
\\
\medskip 
When the probability of observing a positive-valued random variable is proportional to the value of the variable the resultant is size-biased distribution. Size-biased distributions of order $1$ is a special case of the weighted distribution defined in (\ref{wdist}) with weight as $x$. In this paper, the term size-biased distribution will be used to indicate the size-biased distribution of order $1$. Thus taking $r = 1$, in (\ref{wdist}) we obtain the size biased distribution which is given by the \textit{p.d.f.}
\begin{equation}\label{sbdist}
g(x,\theta)=\frac{xf(x,\theta)}{\mu_1^{'}}
\end{equation}

\section{The Size Biased Kumaraswamy Distribution}

The Kumaraswamy distribution \cite{Kum80} is similar to the Beta distribution, but much simpler to use especially in simulation studies due to the simple closed form of both its probability density function and cumulative distribution function. This distribution is mainly used for variables that are lower and upper bounded. The probability density function (pdf) of the Kumaraswamy distribution (Kum) is given by
\begin{equation}\label{pkum}
\begin{aligned}
g(x;a,b) &=abx^{a-1} (1-x^a )^{b-1},\;\;\; \textit{for} \;\; o<x<1\\                                    
  &= 0,\;\; \textit{otherwise}
\end{aligned}	
\end{equation}			       
\medskip 
where, $ a > 0$ and $b > 0$ are the two shape parameters.\\
The $r^{th}$ order raw moment of the Kum is given by
\begin{equation*}
{\mu_r}^{'}= bB(1+\frac{r}{a},b)           
\end{equation*}
where $B(1+\frac{r}{a},b)$ is a beta function defined by the integral
\begin{align*}
B(\alpha ,\beta)=\int_{0}^{1}x^{\alpha -1}(1-x)^{\beta - 1}dx         
\end{align*}
Thus, the expectation of the Kum is given by
\begin{equation}\label{mukum}
{\mu_1}^{'}= bB(1+\frac{1}{a},b)=\mu\;(say)                                                
\end{equation}
Thus using the relation (\ref{sbdist}) and (\ref{mukum}), the pdf of the SBKD is obtained as:
\begin{equation}\label{pSBKD}
f(x;a,b) = \frac{ax^a(1-x^a)^{b-1}}{B(1+\frac{1}{a},b)}\; ;\;   0<x<1
\end{equation}
Ducey and Gove \cite{DucG15} have obtained the weighted distribution of the Generalized Beta I (GBI), the Generalized Beta II (GBII) and the Generalized Gamma (GG) distributions and have shown that the GBI, the GBII, the GG distributions are form invariant under size biased scheme. The Kumaraswamy distribution is a distribution in the GBI($\alpha, \beta, p, q$) family of distributions \cite{McD84}. So that the SBKD is also a special case of the GBI distribution for $\alpha = a > 0, \beta = 1, p = 1+ \frac{1}{a}, q = b$.
\subsection{Special Cases}
\begin{itemize}
\item[\textit{1.}]	Taking $a=1$ in (\ref{pSBKD}) we get,
		$$f(x;b)= \frac{1}{B(2,b)} x^{2-1} (1-x)^{b-1}  ;  \ \ \ 0<x<1$$
		Thus the SBKD reduces to a Beta-I distribution with parameters $2$ and $b$.
\item[\textit{2.}]	Taking $b=1$ in (\ref{pSBKD}) we get
		$$f(x;a)=(a+1) x^a  ;  \ \ \ 0<x<1$$

\item[\textit{3.}]	Taking $a=1$ and $b=1$ in (\ref{pSBKD}), the SBKD reduces to a special case of the Triangular distribution
		$$f(x)= 2x;  \ \ \ 0 < x < 1 $$
\end{itemize}

\subsection{Shape of the distribution}
The SBKD is a Beta$(2, b)$ distribution for $a=1$. Hence, for any $b$ and a fixed $a=1$, the distributional shape of SBKD will be like that of a Beta$(2, b)$ distribution. therefore for $a=1$, the following shapes will be obtained.

\begin{itemize}
\item[\textit{1.}]	We know, a Beta I distribution is always symmetric if both the parameters are equal. Hence for $a=1$, the SBKD is symmetric if $b=2$. 
\item[\textit{2.}]	We know the Beta$(2,1)$ distribution is the Right-Triangular distribution with right angle at the right end, at $x = 1$ and is a straight line with slope $+2$. Hence the SBKD$(1,1)$ is also a Right Triangular distribution.
\item[\textit{3.}]   For a Beta$(a\geq 1, b<1)$, the Beta distribution is negatively skewed J-shaped curve. Hence the SBKD$(1, b < 1)$ is a J-shaped negatively skewed curve.
\item[\textit{4.}]   The Beta$(2, b)$ is unimodal and positively skewed for $b>2$ and negatively skewed for $1<b<2$ and hence the SBKD$(1, b)$ is also positively skewed for $b>2$ and negatively skewed for $1<b<2$. 
\end{itemize}
Figure \ref{bet} gives a plot of the possible shapes of the distribution for $a=1$. \\

\begin{figure}[ht]
	\centering
	\includegraphics[width=12cm]{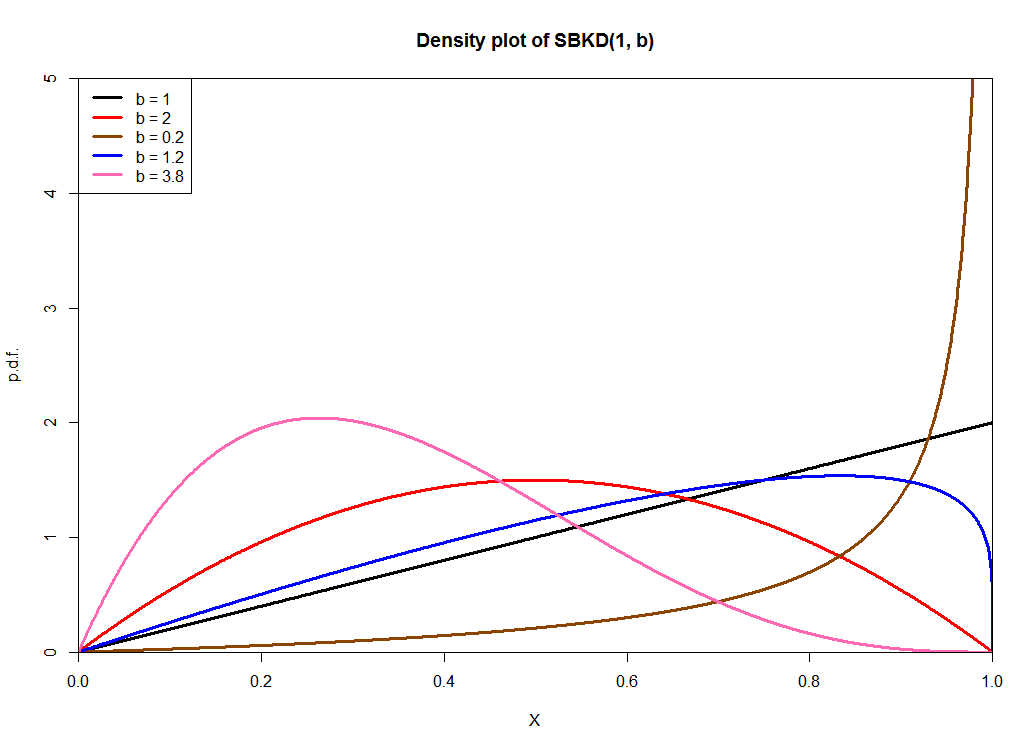}
	\caption{Density plot of SBKD$(1, b)$ for various values of $b$}
	\label{bet}
\end{figure}  

Some of the possible density plots of the SBKD for $a<1$ and $a>1$ is given respectively in Figure \ref{dena} and \ref{dena1}.

\begin{figure}[ht]
\centering
\includegraphics[width=12cm]{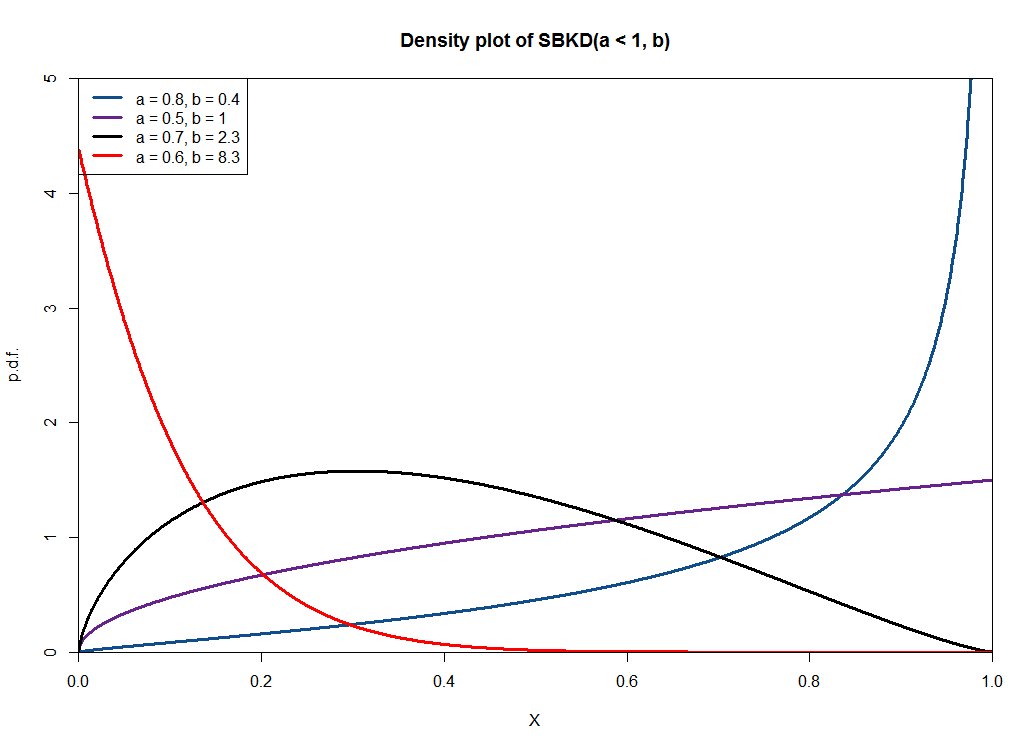}
\caption{Density plot of SBKD for $a<1$}
\label{dena}
\end{figure}   
\begin{figure}[ht]
\centering
\includegraphics[width=12cm]{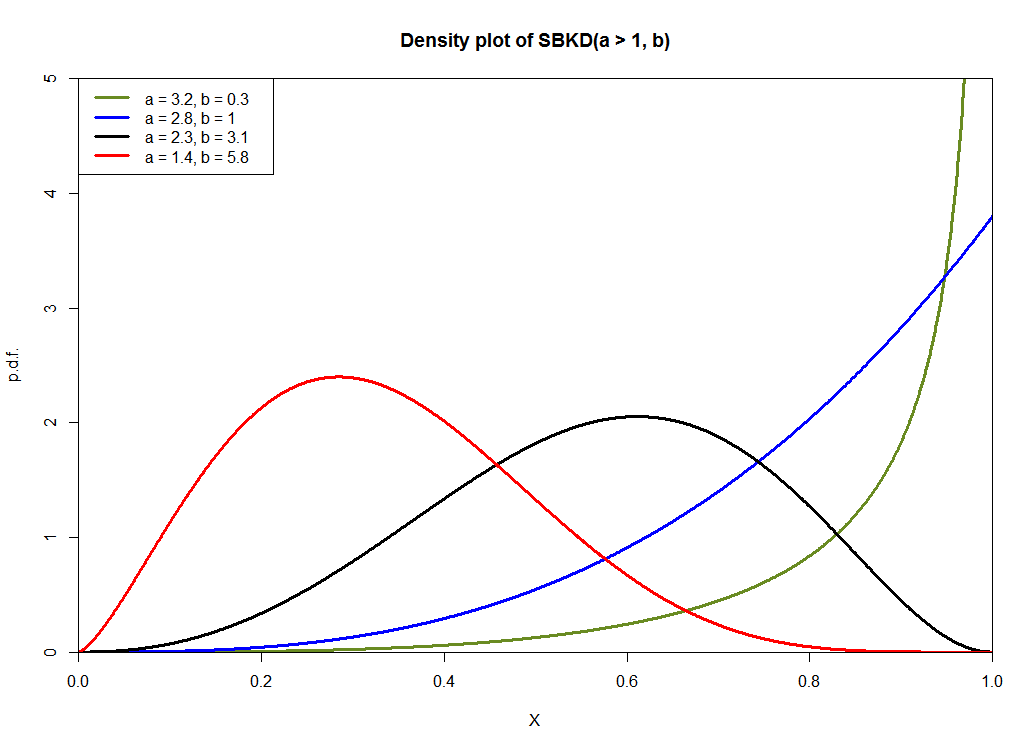}
\caption{Density plot of SBKD for $a>1$}
\label{dena1}
\end{figure}

The possible shapes of the SBKD for $a<1$ and $a>1$ is discussed below:
\begin{itemize}
\item[]	For $a<1, b<1,$ the SBKD has a J-shaped negatively skewed density.
\item[]	For $a<1,b=1,$ the SBKD has an increasing density.
\item[]	For $a<1,b>1,$ the SBKD has either a unimodal positively skewed density or a reverse J-shaped positively skewed decreasing density.
\item[]   For $a>1,b<1,$ the SBKD has a J-shaped negatively skewed density.
\item[] 	For $a>1,b=1,$ the SBKD has a negatively skewed increasing density.
\item[] 	For $a>1,b>1$ the SBKD has a unimodal skewed density.
\end{itemize} 
\section{Properties of the Size-Biased Kumaraswamy Distribution}

\subsection{Cumulative distribution function of SBKD}
\begin{Theorem}
	Let $X \sim SBKD(a,b)$, then its cumulative distribution function (\textit{c.d.f.})is given by (\ref{csbkd})
		\begin{equation}\label{csbkd}
			F(x)=I_{x^a}\left(1+\frac{1}{a}, b \right)                                                    
		\end{equation}
	where, 
	$$I_{x^a}\left(1+\frac{1}{a}, b \right)= \frac{B\left(x^a;1+\frac{1}{a},b\right)}{B\left(1+\frac{1}{a},b\right)}$$
	 is the regularized incomplete beta function and is defined as the ratio of an incomplete beta function, $B(z;\alpha,\beta)=\int _0 ^ z x^{\alpha - 1}(1-x)^{\beta -1} dx$ and the complete beta function, $B(\alpha, \beta)$.
\end{Theorem}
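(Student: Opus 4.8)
The plan is to obtain the \textit{c.d.f.} directly by integrating the density (\ref{pSBKD}) from $0$ to $x$ and then recognising the resulting integral as an incomplete beta integral. First I would write, for $0<x<1$,
\[
F(x) = \int_0^x f(t;a,b)\,dt = \frac{a}{B\!\left(1+\frac{1}{a},b\right)} \int_0^x t^a (1-t^a)^{b-1}\,dt,
\]
using that $f$ vanishes on $(-\infty,0]$, so that the lower limit may be taken as $0$.

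The key step is the substitution $u = t^a$, which gives $du = a t^{a-1}\,dt$ and hence $a t^a\,dt = u^{1/a}\,du = u^{(1+\frac{1}{a})-1}\,du$, while the limits transform from $t\in(0,x)$ to $u\in(0,x^a)$. This converts the integral into
\[
F(x) = \frac{1}{B\!\left(1+\frac{1}{a},b\right)} \int_0^{x^a} u^{(1+\frac{1}{a})-1}(1-u)^{b-1}\,du = \frac{B\!\left(x^a;1+\frac{1}{a},b\right)}{B\!\left(1+\frac{1}{a},b\right)},
\]
which is precisely $I_{x^a}\!\left(1+\frac{1}{a},b\right)$ by the definition of the regularized incomplete beta function recalled in the statement.

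Finally I would verify the boundary behaviour to confirm this is a genuine \textit{c.d.f.}: at $x=0$ the upper limit $x^a=0$ makes the incomplete beta integral vanish, so $F(0)=0$; at $x=1$ we have $x^a=1$, so $B(x^a;1+\frac{1}{a},b)=B(1+\frac{1}{a},b)$ and $F(1)=1$, consistent with (\ref{mukum})--(\ref{pSBKD}); monotonicity is immediate since the integrand is nonnegative on $(0,1)$. There is essentially no serious obstacle here: the only point requiring care is bookkeeping the power of $t$ under the substitution when $a\neq 1$, i.e. checking that the Jacobian combines with the factor $t^a$ to yield exactly the exponent $\left(1+\frac{1}{a}\right)-1$ needed for the first parameter of the beta function.
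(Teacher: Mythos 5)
Your proposal is correct and follows essentially the same route as the paper: integrate the density (\ref{pSBKD}) from $0$ to $x$ and identify the result as the regularized incomplete beta function, with your substitution $u=t^a$ simply making explicit a step the paper leaves implicit. The added boundary and monotonicity checks are fine but not needed for the statement.
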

\begin{proof}
	As $X \sim SBKD(a,b)$ so its \textit{p.d.f.} is given by (\ref{pSBKD}). Let $F(x)$ be the \textit{c.d.f.} of SBKD then by definition,
                                                $$F(x)= \int_0^x f(y)dy; \ \ \ 0<x<1      $$
	Thus substituting $f(y)$ from (\ref{pSBKD}) we have
	\begin{align*}
	F(x) &= \frac{1}{B \left(1+\frac{1}{a},b\right)}\int_0^x ay^a(1-y^a)^{b-1} dy  \\
	  &= I_{x^a} \left(1+\frac{1}{a},b\right)     
     \end{align*}
\end{proof}
\subsection{Quantile function of SBKD} \label{rngen}
	\begin{Theorem}
	Let $X \sim SBKD(a,b)$, then its quantile function is given by (\ref{qsbkd})
		\begin{equation}\label{qsbkd}
			Q(p)= \left[{I_p}^{-1}\left(1+\frac{1}{a}, b \right)\right]^\frac{1}{a}                                                   
		\end{equation}
		where, ${I_p}^{-1}(\alpha, \beta)$ is the inverse regularized beta function defined as 
		${I_p}^{-1}(\alpha, \beta)=w$ such that $I_w(\alpha, \beta)=p$ 
	\end{Theorem}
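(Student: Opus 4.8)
The plan is to obtain the quantile function as the inverse of the cumulative distribution function established in the previous theorem. Recall that for $X \sim SBKD(a,b)$ we have shown $F(x) = I_{x^a}\!\left(1+\frac{1}{a}, b\right)$ on $0 < x < 1$. By definition the quantile function $Q(p)$ is the value $x$ solving $F(x) = p$ for $p \in (0,1)$, so I would simply set $I_{x^a}\!\left(1+\frac{1}{a}, b\right) = p$ and solve for $x$.

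The key steps, in order, are as follows. First I would note that $F$ is continuous and strictly increasing on $(0,1)$: continuity is clear from the integral representation, and strict monotonicity follows because the $SBKD$ density in \eqref{pSBKD} is strictly positive on $(0,1)$. Hence $F$ admits a genuine inverse and $Q(p) = F^{-1}(p)$ is well defined for each $p \in (0,1)$. Second, starting from $I_{x^a}\!\left(1+\frac{1}{a}, b\right) = p$, I would apply the inverse regularized beta function ${I_p}^{-1}(\alpha,\beta)$, which by its defining property ${I_p}^{-1}(\alpha,\beta) = w \iff I_w(\alpha,\beta) = p$, yields $x^a = {I_p}^{-1}\!\left(1+\frac{1}{a}, b\right)$. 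Third, since $0 < x < 1$ and $a > 0$, the map $x \mapsto x^a$ is a strictly increasing bijection of $(0,1)$ onto itself, so I may take $a$-th roots on both sides to conclude
\begin{equation*}
Q(p) = \left[{I_p}^{-1}\!\left(1+\frac{1}{a}, b\right)\right]^{\frac{1}{a}},
\end{equation*}
which is \eqref{qsbkd}.

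This argument is essentially a one-line inversion, so there is no serious obstacle; the only point requiring a word of care is justifying that the inversion is legitimate, i.e. that $F$ is a strictly increasing continuous bijection from $(0,1)$ onto $(0,1)$ and that the inverse regularized beta function is correspondingly well defined on the relevant range of parameters. Once that is noted, composing the two monotone inversions — first undoing $I_{(\cdot)}\!\left(1+\frac{1}{a},b\right)$ and then undoing $x \mapsto x^a$ — gives the result directly, and I would present the proof in exactly that order.
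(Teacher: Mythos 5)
Your proposal is correct and follows essentially the same route as the paper: invert the \textit{c.d.f.} relation $F(x)=I_{x^a}\left(1+\frac{1}{a},b\right)=p$ via the inverse regularized beta function and then take the $a$-th root. The only difference is that you explicitly justify the invertibility of $F$ (continuity and strict monotonicity), a point the paper's one-line proof leaves implicit.
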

	\begin{proof}
		Let $F(x)=p$ be a \textit{c.d.f.} then the corresponding quantile function, $Q(p)$ is defined as
		\begin{equation} \label{qdef}
			Q(p)=F^{-1} (p) 
		\end{equation}
Therefore by using the relation (\ref{csbkd}) and (\ref{qdef}) the quantile function of the SBKD is 
	$$Q(p)= \left[{I_p}^{-1}\left(1+\frac{1}{a}, b \right)\right]^\frac{1}{a}$$
	\end{proof}
	
	\begin{Corollary}
	The median of the SBKD is
	$$Q(0.5)= \left[{I_{0.5}}^{-1}\left(1+\frac{1}{a}, b \right)\right]^\frac{1}{a}$$
	\end{Corollary}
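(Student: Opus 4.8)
The plan is to obtain the median as an immediate specialization of the quantile function established in Theorem~2. Recall that for any continuous, strictly increasing \textit{c.d.f.} $F$, the median $m$ is characterized by $F(m) = 0.5$, equivalently $m = F^{-1}(0.5) = Q(0.5)$. So the entire argument reduces to substituting $p = 0.5$ into the closed form for $Q(p)$ already derived.

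First I would note that $X \sim SBKD(a,b)$ is supported on $(0,1)$ with density \eqref{pSBKD} strictly positive on that interval, so its \textit{c.d.f.} $F(x) = I_{x^a}\!\left(1+\tfrac1a, b\right)$ from \eqref{csbkd} is continuous and strictly increasing on $(0,1)$; hence the median exists and is unique, and it equals $Q(0.5)$ with $Q$ the quantile function. Second, I would invoke Theorem~2 directly: since
$$Q(p) = \left[{I_p}^{-1}\!\left(1+\tfrac1a, b\right)\right]^{1/a}$$
for all $p \in (0,1)$, evaluating at $p = 0.5$ yields
$$Q(0.5) = \left[{I_{0.5}}^{-1}\!\left(1+\tfrac1a, b\right)\right]^{1/a},$$
which is the asserted expression. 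One could optionally add a sanity check in the special case $a=1$, where the SBKD is Beta$(2,b)$ and the formula collapses to the median of that Beta law, $Q(0.5) = {I_{0.5}}^{-1}(2,b)$, consistent with the known quantile of a Beta distribution.

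There is essentially no obstacle here: the corollary is a one-line substitution into an already-proved theorem, and the only thing worth stating explicitly is the standard fact that a median of a continuous strictly increasing distribution is the value of the quantile function at $0.5$. I would keep the written proof to two or three sentences and not belabor the regularity remarks beyond noting continuity and strict monotonicity of $F$.
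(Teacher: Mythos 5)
Your proposal is correct and matches the paper's treatment: the corollary is stated there as an immediate consequence of Theorem~2, obtained by setting $p=0.5$ in the quantile function $Q(p)= \left[{I_p}^{-1}\left(1+\frac{1}{a}, b \right)\right]^{\frac{1}{a}}$, exactly as you do. Your added remark on continuity and strict monotonicity of $F$ is a harmless (and slightly more careful) justification than the paper offers, but not a different approach.
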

\subsubsection{Random number generation} 
	Using the quantile function of the SBKD as defined in (\ref{qsbkd}), a random sample of size $n$ can be simulated. Let $U$ be a uniform ($U(0,1)$) \textit{r.v.} and let $Q(p)$, $0\leq p \leq 1$ be the quantile function of SBKD, then by uniform transformation rule, \cite{00Gil} the variable $X$, where $x = Q(u)$, has a distribution with quantile function $Q(p)$. Thus, by using the uniform transformation rule, a random sample of size $n$ can be easily simulated from the SBKD by generating a random sample of the same size from a $U(0,1)$ distribution.
\subsection{Moment generating function of SBKD}
	\begin{Theorem}
		Let $X \sim SBKD(a,b)$ then the moment generating function, $M_X (t)$ of $X$ is given by
		\begin{equation} \label{mgfsbkd}
		M_X (t)= \sum_{i=0}^\infty \frac{t^i}{i!}\frac{B\left(1+\frac{i+1}{a},b\right)}{B\left(1+\frac{1}{a},b\right)}
		\end{equation}
	\end{Theorem}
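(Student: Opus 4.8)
The plan is to start from the definition $M_X(t) = E(e^{tX}) = \int_0^1 e^{tx} f(x;a,b)\,dx$, with $f$ given by (\ref{pSBKD}), expand the exponential in its Maclaurin series $e^{tx} = \sum_{i=0}^\infty \frac{(tx)^i}{i!}$, and interchange summation and integration to write $M_X(t) = \sum_{i=0}^\infty \frac{t^i}{i!} \int_0^1 x^i f(x;a,b)\,dx = \sum_{i=0}^\infty \frac{t^i}{i!} E(X^i)$. The interchange is legitimate because on the bounded interval $(0,1)$ every partial sum of the exponential series is dominated by $\sum_{i\ge 0}\frac{|t|^i x^i}{i!} = e^{|t|x} \le e^{|t|}$, which is integrable against the density $f$; hence dominated convergence (equivalently, Tonelli applied to the nonnegative series) justifies the term-by-term integration.

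It then remains to evaluate the raw moments of the SBKD, $E(X^i) = \frac{a}{B\left(1+\frac1a,b\right)} \int_0^1 x^{i+a}(1-x^a)^{b-1}\,dx$. I would use the substitution $u = x^a$, so that $du = a x^{a-1}\,dx$ and $x^{i+a}\,dx = \frac1a\, u^{(i+1)/a}\,du$; the integral collapses to $\frac1a \int_0^1 u^{(i+1)/a}(1-u)^{b-1}\,du = \frac1a\, B\left(1+\frac{i+1}{a},\, b\right)$ by the definition of the beta function recalled after (\ref{mukum}). Cancelling the factor $a$ gives $E(X^i) = B\left(1+\frac{i+1}{a},\, b\right) / B\left(1+\frac1a,\, b\right)$, and inserting this into the series yields (\ref{mgfsbkd}).

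The computation is largely routine; the only point genuinely needing care is the justification of the term-by-term integration, and secondarily checking that the beta integral that appears is finite — which it is, since $1+\frac{i+1}{a} > 0$ and $b > 0$ for every $i \ge 0$. As a consistency check one can note that, by (\ref{sbdist}), $E(X^i)$ for the SBKD equals $\mu_{i+1}^{'}/\mu_1^{'}$ computed from the Kumaraswamy moment formula, or that the SBKD sits inside the GBI family whose moments are known; I would remark on this but present the direct calculation as the proof.
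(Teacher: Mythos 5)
Your proposal is correct and follows essentially the same route as the paper: expand $e^{tx}$ in its power series inside $\int_0^1 e^{tx}\frac{ax^a(1-x^a)^{b-1}}{B\left(1+\frac{1}{a},b\right)}\,dx$ and integrate term by term, each term reducing to a beta integral via $u=x^a$. You merely supply details the paper leaves implicit, namely the dominated-convergence justification for the interchange and the explicit substitution giving $\int_0^1 x^{i+a}(1-x^a)^{b-1}dx=\frac{1}{a}B\left(1+\frac{i+1}{a},b\right)$.
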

	\begin{proof}
		By definition, the moment generating function \textit{m.g.f.} of a \textit{r.v.} $X$ is given by
		$$M_X (t)=E(e^{tx} )= \int_{-\infty}^{\infty} e^{tx} f(x)dx$$
		Thus, for a SBKD, the \textit{m.g.f.} is
		
	\begingroup\makeatletter\def\f@size{8}\check@mathfonts
\def\maketag@@@#1{\hbox{\m@th\normalsize\normalfont#1}}
		\begin{align*}
			M_X (t) &= \frac{1}{B\left(1+\frac{1}{a}, b \right)} \int_{0}^{1} e^{tx} ax^a (1-x^a)^{b-1} dx \\
			& = \frac{1}{B\left(1+\frac{1}{a}, b \right)} \int_{0}^{1} \left(1+tx+\frac{t^2 x^2}{2!}+ \frac{t^3 x^3}{3!}+ ... + \frac{t^n x^n}{n!} + ... \right)ax^a (1-x^a)^{b-1} dx \\
			& = \sum_{i=0}^\infty \frac{t^i}{i!}\frac{B\left(1+\frac{i+1}{a},b\right)}{B\left(1+\frac{1}{a},b\right)}
		\end{align*} \endgroup
	\end{proof}
	
	\begin{Corollary}
		The cumulant generating function, $K_X(t)$ of the SBKD is given by
$$ K_X (t)= ln \sum_{i=0}^\infty \frac{t^i}{i!}\frac{B\left(1+\frac{i+1}{a},b\right)}{B\left(1+\frac{1}{a},b\right)} $$
	\end{Corollary}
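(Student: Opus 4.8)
The plan is to invoke the definition of the cumulant generating function directly: for any random variable $X$ whose moment generating function $M_X(t)$ exists in a neighbourhood of the origin, one has $K_X(t) = \ln M_X(t)$. So the entire content of the corollary is a substitution.

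First I would recall that $X \sim SBKD(a,b)$ is supported on $(0,1)$, so $e^{tX} \le e^{|t|}$ for every real $t$; hence $M_X(t)$ is finite for all $t \in \mathbb{R}$, and in particular $M_X(t) > 0$, so $\ln M_X(t)$ is well defined everywhere. This justifies writing $K_X(t) = \ln M_X(t)$ without any restriction on $t$.

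Next I would simply substitute the expression for $M_X(t)$ obtained in the preceding theorem, namely the series in~(\ref{mgfsbkd}), into $K_X(t) = \ln M_X(t)$, yielding
$$K_X(t) = \ln \sum_{i=0}^{\infty} \frac{t^i}{i!}\,\frac{B\!\left(1+\frac{i+1}{a},b\right)}{B\!\left(1+\frac{1}{a},b\right)},$$
which is exactly the claimed formula. I would perhaps add the remark that the series inside the logarithm is absolutely convergent for all $t$ (it is dominated termwise, after noting $B(1+\frac{i+1}{a},b) \le B(1+\frac{1}{a},b)$ since the integrand $x^{i+1/a}(1-x)^{b-1}$ is pointwise decreasing in $i$ on $(0,1)$, by the factor $e^{|t|}$), so the manipulation is legitimate.

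There is essentially no obstacle here: the only thing worth stating carefully is the positivity and finiteness of $M_X(t)$ that makes the logarithm meaningful, and that follows immediately from the boundedness of the support. The corollary is a one-line consequence of the theorem together with the definition $K_X = \ln M_X$.
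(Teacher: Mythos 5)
Your proposal is correct and is exactly the argument the paper intends: the corollary is stated as an immediate consequence of the preceding theorem via the definition $K_X(t)=\ln M_X(t)$, with the expression in (\ref{mgfsbkd}) substituted for $M_X(t)$. Your added remarks on the positivity and finiteness of $M_X(t)$ (from the bounded support) and on the absolute convergence of the series are sound and only make the one-line deduction more careful than the paper's.
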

	
	\begin{Corollary} \label{rawmom}
	The $r^{th}$ order raw moment of SBKD is
	\begin{equation}
		{\mu_r}^{'} = \frac{B\left( 1+\frac{r+1}{a},b\right)}{B\left( 1+\frac{1}{a},b\right)}
	\end{equation}
	\end{Corollary}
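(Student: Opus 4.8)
The plan is to obtain the moment directly from the moment generating function already established in (\ref{mgfsbkd}). For any random variable possessing an m.g.f.\ the Maclaurin expansion reads $M_X(t)=\sum_{i=0}^{\infty}\frac{t^i}{i!}\,{\mu_i}'$, so the $r^{th}$ raw moment of $X$ is precisely the coefficient of $t^r/r!$ in $M_X(t)$. Comparing this generic expansion with the series in (\ref{mgfsbkd}) term by term identifies ${\mu_r}'$ with $B\!\left(1+\frac{r+1}{a},b\right)\big/B\!\left(1+\frac{1}{a},b\right)$, which is the asserted formula. Since Theorem on the m.g.f.\ may be assumed, this argument is essentially one line.

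For a self-contained alternative I would instead compute ${\mu_r}'=E(X^r)$ straight from the density (\ref{pSBKD}). Writing
${\mu_r}'=\frac{1}{B\left(1+\frac{1}{a},b\right)}\int_{0}^{1} a\,x^{a+r}(1-x^a)^{b-1}\,dx$
and substituting $u=x^a$ (so that $du=a\,x^{a-1}dx$, $x^{r}=u^{r/a}$, and $a\,x^{a}(1-x^a)^{b-1}dx=u^{1/a}(1-u)^{b-1}du$), the integral collapses to $\int_{0}^{1}u^{(r+1)/a}(1-u)^{b-1}du=B\!\left(1+\frac{r+1}{a},b\right)$ by the definition of the beta function recalled in Section~2. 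Dividing by the normalizing constant yields the claim. This route has the advantage of not relying on interchanging summation and integration.

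I do not expect any genuine obstacle. The only point deserving a word of care is that the term-by-term matching of coefficients in the first approach is legitimate: this rests on the uniform convergence of the series for $e^{tx}$ on the bounded support $[0,1]$, which is exactly what underpins the interchange of sum and integral in the proof of (\ref{mgfsbkd}), so for the corollary it can simply be invoked. As a sanity check one notes that $r=0$ returns ${\mu_0}'=1$, while $r=1$ recovers the mean of the SBKD and agrees with the $i=1$ term of the m.g.f.\ expansion.
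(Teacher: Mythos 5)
Your first argument is exactly the paper's (implicit) proof: the corollary is stated as an immediate consequence of the m.g.f.\ theorem, obtained by matching the coefficient of $t^r/r!$ in the series $M_X(t)=\sum_{i\ge 0}\frac{t^i}{i!}\frac{B\left(1+\frac{i+1}{a},b\right)}{B\left(1+\frac{1}{a},b\right)}$ with the generic expansion $\sum_{i\ge 0}\frac{t^i}{i!}{\mu_i}'$, and your remark on the legitimacy of this matching is sound. Your alternative direct computation, ${\mu_r}'=\frac{1}{B\left(1+\frac{1}{a},b\right)}\int_0^1 a x^{a+r}(1-x^a)^{b-1}dx$ with the substitution $u=x^a$ giving $B\left(1+\frac{r+1}{a},b\right)$, is also correct and self-contained, but it is not needed beyond what the paper already assumes.
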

	
	\begin{Corollary}
		The mean \textit{i.e.} the $1^{st}$ order raw moment of SBKD is
	\begin{equation}\label{mnsbkd}
		{\mu_1}^{'} = \frac{B\left( 1+\frac{2}{a},b\right)}{B\left( 1+\frac{1}{a},b\right)}
	\end{equation}
	\end{Corollary}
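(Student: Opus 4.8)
The plan is to obtain this as an immediate specialization of the preceding Corollary~\ref{rawmom}. That corollary already asserts
\[
{\mu_r}^{'} = \frac{B\!\left(1+\tfrac{r+1}{a},b\right)}{B\!\left(1+\tfrac{1}{a},b\right)},
\]
so the mean is recovered simply by putting $r=1$, which turns the numerator $B\!\left(1+\tfrac{r+1}{a},b\right)$ into $B\!\left(1+\tfrac{2}{a},b\right)$ and leaves the normalizing denominator untouched. There is essentially nothing to prove beyond this substitution, since $E(X)={\mu_1}^{'}$ by definition of the first raw moment.

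Should one prefer a self-contained argument not leaning on Corollary~\ref{rawmom}, I would instead integrate directly: write $E(X)=\int_0^1 x\,f(x;a,b)\,dx$ with $f$ as in~(\ref{pSBKD}), obtaining $\frac{1}{B(1+1/a,b)}\int_0^1 a x^{a+1}(1-x^a)^{b-1}\,dx$, and then apply the substitution $u=x^a$, $du=ax^{a-1}\,dx$, so that $x^{a+1}\,dx = x^{2}\cdot x^{a-1}\,dx = u^{2/a}\cdot\frac{du}{a}$. The integral becomes $\int_0^1 u^{2/a}(1-u)^{b-1}\,du = B\!\left(1+\tfrac{2}{a},b\right)$ by the definition of the beta function, giving the claimed ratio. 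A third route is to differentiate the m.g.f.\ in~(\ref{mgfsbkd}) once and evaluate at $t=0$, which isolates the $i=1$ term $\frac{B(1+2/a,b)}{B(1+1/a,b)}$; this is really the same computation repackaged.

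No step here poses a genuine obstacle; the only thing to be careful about is bookkeeping of the exponent shift in the beta function when passing from the weight $x$ in the integrand to the parameter $1+\tfrac{2}{a}$, i.e.\ remembering that the extra factor of $x$ contributes $x^{a}$ after substitution and hence raises the first beta parameter by $1/a$ rather than by $1$. Given Corollary~\ref{rawmom}, I would present the one-line derivation and omit the integral computation entirely.

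\begin{proof}
This follows at once from Corollary~\ref{rawmom} by setting $r=1$, since ${\mu_1}^{'}=E(X)$.
\end{proof}
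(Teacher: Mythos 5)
Your proposal is correct and matches the paper's treatment: the paper states this corollary as an immediate consequence of the general raw-moment formula (Corollary~\ref{rawmom}), which is exactly your one-line substitution $r=1$. The alternative direct integration you sketch is also sound, but unnecessary given the preceding corollary.
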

\subsection{Moments of SBKD}

	\begin{Theorem}
	Let $X \sim SBKD(a,b)$, then the $r^{th}$ order raw moment ${\mu_r}^{'}$  and central moment $\mu_r$ are defined respectively by (\ref{rmom}) and (\ref{cenmom})
	\begin{equation} \label{rmom}
	{\mu_r}^{'} = \frac{B\left( 1+\frac{r+1}{a},b\right)}{B\left( 1+\frac{1}{a},b\right)}
	\end{equation}
	
	\begingroup\makeatletter\def\f@size{8}\check@mathfonts
\def\maketag@@@#1{\hbox{\m@th\normalsize\normalfont#1}}%
	\begin{align} \label{cenmom}
	 \mu_r = & \frac{1}{B\left( 1+\frac{1}{a},b\right)} \left[ B\left( 1+\frac{r+1}{a},b\right)-r\mu B\left( 1+\frac{r}{a},b\right)+\frac{r(r-1)}{2}\mu^2 B\left(1+\frac{r-1}{a},b\right)-... \right. \nonumber \\
	& \left. +(-1)^k\ \comb{r}{k}\mu ^k B\left( 1+\frac{r-k+1}{a},b\right)+...+(-1)^r \mu ^r B\left( 1+\frac{1}{a},b\right)\right]
	\end{align} \endgroup
	where, $\mu$ is the mean of the SBKD and is given by (\ref{mnsbkd}).
	\end{Theorem}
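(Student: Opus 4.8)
The plan is to derive both formulas directly from the raw moments, the first of which is already in hand. First, I would observe that (\ref{rmom}) is nothing but the statement of Corollary \ref{rawmom}, obtained by matching the coefficient of $t^r/r!$ in the series form of the \textit{m.g.f.} (\ref{mgfsbkd}). For a self-contained derivation one can instead write ${\mu_r}' = \int_0^1 x^r f(x;a,b)\,dx$ with $f$ as in (\ref{pSBKD}) and substitute $u = x^a$; the integrand becomes $u^{(r+1)/a}(1-u)^{b-1}/B(1+\frac1a,b)$, whose integral over $(0,1)$ is exactly $B(1+\frac{r+1}{a},b)/B(1+\frac1a,b)$. This also shows that ${\mu_r}'$ is finite for every $r$, since $a,b>0$.

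Next, for the central moments I would expand $(X-\mu)^r$ by the binomial theorem and take expectations termwise, using the mean $\mu = {\mu_1}'$ from (\ref{mnsbkd}):
$$\mu_r = E\left[(X-\mu)^r\right] = \sum_{k=0}^{r}(-1)^k\binom{r}{k}\mu^k\,E\left[X^{r-k}\right] = \sum_{k=0}^{r}(-1)^k\binom{r}{k}\mu^k\,{\mu_{r-k}}'.$$
Substituting ${\mu_{r-k}}' = B(1+\frac{r-k+1}{a},b)/B(1+\frac1a,b)$ into each summand and factoring out the common constant $1/B(1+\frac1a,b)$ gives
$$\mu_r = \frac{1}{B\left(1+\frac1a,b\right)}\sum_{k=0}^{r}(-1)^k\binom{r}{k}\mu^k\,B\left(1+\frac{r-k+1}{a},b\right),$$
and writing out the terms $k=0,1,2,\dots,r$ — noting in particular that the $k=r$ term carries the factor $B(1+\frac1a,b)$ because $r-r+1=1$ — reproduces (\ref{cenmom}) verbatim.

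The argument is routine once the raw moment formula is granted; the one place that needs care is the indexing, namely keeping the alternating signs $(-1)^k$, the binomial weights $\binom{r}{k}$, and the shifted parameters $1+\frac{r-k+1}{a}$ matched against the displayed terms of (\ref{cenmom}), and checking that each beta function there is well defined (which it is, since $1+\frac{r-k+1}{a}>0$ and $b>0$ for $0\le k\le r$). Beyond this bookkeeping I do not anticipate any substantive difficulty.
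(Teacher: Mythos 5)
Your proposal is correct and takes essentially the same route as the paper: the paper also obtains (\ref{rmom}) by citing Corollary \ref{rawmom}, and derives (\ref{cenmom}) by expanding $(x-\mu)^r$ binomially and integrating term by term against the density (\ref{pSBKD}), which is the same computation as your termwise expectation $\sum_{k}(-1)^k\comb{r}{k}\mu^k{\mu_{r-k}}'$. Your indexing and sign bookkeeping agree with the displayed terms of (\ref{cenmom}), so there is nothing to add.
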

	
	\begin{proof}
	The proof for (\ref{rmom}) follows directly from the Corollary \ref{rawmom}.
	Now, the $r^{th}$ order central moment is defined as
	\begin{equation} \label{cnmom}
	\mu _r = E[(X-E(X))^r] = \int _{-\infty}^\infty (x-E(X))^r f(x) dx
	\end{equation}
	The $E(X)$ or the first order raw moment of the SBKD is given by (\ref{mnsbkd}). Let this be denoted by $\mu $. Thus using the relation (\ref{pSBKD}), (\ref{mnsbkd}) and (\ref{cnmom}), the $r^{th}$ order central moment of the SBKD is obtained as
	\begingroup\makeatletter\def\f@size{8}\check@mathfonts
\def\maketag@@@#1{\hbox{\m@th\normalsize\normalfont#1}}%
	\begin{align*}
	\mu _r & = \frac{1}{B\left(1+\frac{1}{a}, b \right)} \int _{-\infty}^{\infty} (x-\mu )^r ax^a (1-x^a)^{b-1} dx \\
		   & = \frac{1}{B\left(1+\frac{1}{a}, b \right)} \int _{0}^{1} \left[(x^r - \comb{r}{1} x^{r-1}\mu + ...+(-1)^k \  \comb{r}{k} x^{r-k} \mu ^k +...+(-1)^r \mu ^r \right] ax^a (1-x^a)^{b-1} dx \\
		   & = \frac{1}{B\left( 1+\frac{1}{a},b\right)} \left[ B\left( 1+\frac{r+1}{a},b\right)-r\mu B\left( 1+\frac{r}{a},b\right)+\frac{r(r-1)}{2}\mu^2 B\left(1+\frac{r-1}{a},b\right)-... \right. \nonumber \\
	& \ \ \ \ \left. +(-1)^k\ \comb{r}{k}\mu ^k B\left( 1+\frac{r-k+1}{a},b\right)+...+(-1)^r \mu ^r B\left( 1+\frac{1}{a},b\right)\right]
	\end{align*} \endgroup
	\end{proof}
	\begin{Corollary}
	The first four central moments are
	
	\begingroup\makeatletter\def\f@size{8}\check@mathfonts
\def\maketag@@@#1{\hbox{\m@th\normalsize\normalfont#1}}%
	\begin{flalign*}
	\mu _1 & = 0 \\
	\mu _2 & = \frac{B\left( 1+\frac{3}{a},b\right)}{B\left( 1+\frac{1}{a},b\right)}-\left[\frac{B\left( 1+\frac{2}{a},b\right)}{B\left( 1+\frac{1}{a},b\right)} \right]^2 \\
	\mu _3 & = \frac{B\left( 1+\frac{4}{a},b\right)}{B\left( 1+\frac{1}{a},b\right)}-3\frac{B\left( 1+\frac{2}{a},b\right)B\left( 1+\frac{3}{a},b\right)}{B\left( 1+\frac{1}{a},b\right)^2}+2\left[\frac{B\left( 1+\frac{2}{a},b\right)}{B\left( 1+\frac{1}{a},b\right)}\right]^3 \\
	\mu _4 & = \frac{B\left( 1+\frac{5}{a},b\right)}{B\left( 1+\frac{1}{a},b\right)}-4\frac{B\left( 1+\frac{2}{a},b\right)B\left( 1+\frac{4}{a},b\right)}{B\left( 1+\frac{1}{a},b\right)^2}+6\frac{B\left( 1+\frac{2}{a},b\right)^2B\left( 1+\frac{3}{a},b\right)}{B\left( 1+\frac{1}{a},b\right)^3}-3\left[\frac{B\left( 1+\frac{2}{a},b\right)}{B\left( 1+\frac{1}{a},b\right)}\right]^4
	\end{flalign*} \endgroup
	\end{Corollary}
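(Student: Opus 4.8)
The plan is to obtain each of the four central moments by specialising the general central-moment formula \eqref{cenmom} of the preceding Theorem to $r=1,2,3,4$, using that the mean $\mu=\mu_1'$ and all the raw moments are already given in closed form by Corollary~\ref{rawmom} as $\mu_r'=B\!\left(1+\tfrac{r+1}{a},b\right)\big/B\!\left(1+\tfrac{1}{a},b\right)$. Equivalently, one may simply invoke the standard identities expressing central moments in terms of raw moments, namely $\mu_1=0$, $\mu_2=\mu_2'-\mu^2$, $\mu_3=\mu_3'-3\mu_2'\mu+2\mu^3$, and $\mu_4=\mu_4'-4\mu_3'\mu+6\mu_2'\mu^2-3\mu^4$, and then substitute the expressions from Corollary~\ref{rawmom}. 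Both routes lead to the same four displayed formulas, so I would present whichever is shorter to typeset.

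First I would dispose of $\mu_1$: by definition $\mu_1=E[X-\mu]=E(X)-\mu=0$, which is immediate and needs no computation. Next, for $r=2$ I would write \eqref{cenmom} out as $\mu_2=\frac{1}{B(1+\frac1a,b)}\bigl[B(1+\tfrac3a,b)-2\mu B(1+\tfrac2a,b)+\mu^2 B(1+\tfrac1a,b)\bigr]$, then use $\mu=B(1+\tfrac2a,b)/B(1+\tfrac1a,b)$ to recognise that the middle term equals $2\mu^2$ and the last term equals $\mu^2$, so the three terms collapse to $B(1+\tfrac3a,b)/B(1+\tfrac1a,b)-\mu^2$, which is exactly the claimed expression after re-expanding $\mu^2$.

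For $r=3$ and $r=4$ the procedure is identical, only longer: I would substitute the binomial expansion of $(x-\mu)^r$ term by term, integrate against the SBKD density using \eqref{pSBKD}, replace each resulting integral by the appropriate beta function via Corollary~\ref{rawmom}, and then substitute $\mu=B(1+\tfrac2a,b)/B(1+\tfrac1a,b)$ to consolidate the powers of $\mu$ with the beta-function factors. The only thing to watch is the bookkeeping of the binomial coefficients and alternating signs — this is the sole ``obstacle'', and it is purely clerical: one must check, for instance, that in $\mu_4$ the coefficient of the cross term $B(1+\tfrac2a,b)^2 B(1+\tfrac3a,b)/B(1+\tfrac1a,b)^3$ comes out as $+6$ (from $\binom{4}{2}=6$ combined with the $\mu^2$ substitution) and that the final $\mu^4$ term carries coefficient $-3=-4+6-3-\dots$ after all like terms are gathered. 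No new idea is required beyond the Theorem already proved.
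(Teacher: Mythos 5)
Your proposal is correct and matches the intended derivation: the paper states this corollary without a separate proof precisely because it is the specialisation of the general central-moment formula (\ref{cenmom}) to $r=1,2,3,4$, equivalently the standard identities $\mu_2=\mu_2'-\mu^2$, $\mu_3=\mu_3'-3\mu_2'\mu+2\mu^3$, $\mu_4=\mu_4'-4\mu_3'\mu+6\mu_2'\mu^2-3\mu^4$ with the raw moments from Corollary~\ref{rawmom} substituted in. Only your parenthetical bookkeeping remark ``$-3=-4+6-3-\dots$'' is garbled: the $\mu^4$ coefficient arises as $-\binom{4}{3}+\binom{4}{4}=-4+1=-3$ once $\mu_1'=\mu$ is used, but this does not affect the validity of the argument.
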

\subsection{Skewness and kurtosis of SBKD}
 The skewness  of the SBKD is given by
 \begin{align*}
 S_k & = 3\frac{Mean - Median}{{\mu_2}^\frac{1}{2}} \\
 & = 3 \frac{B\left(1+\frac{2}{a},b\right)-\left[{I_{0.5}}^{-1}\left(1+\frac{1}{a}, b \right)\right]^\frac{1}{a}B\left(1+\frac{1}{a},b\right)}{ \left[ B\left(1+\frac{3}{a},b\right)B\left(1+\frac{1}{a},b\right)-B\left( 1+\frac{2}{a},b \right)^2\right]^\frac{1}{2}}
 \end{align*}
	The kurtosis of the SBKD is given by
	\begin{align*}
 \beta_2 & = \frac{\mu_4}{{\mu_2}^2} \\
 & = \frac{B\left(1+\frac{5}{a},b\right)-4\mu B\left(1+\frac{4}{a},b\right)+6\mu ^2 B\left(1+\frac{3}{a},b\right)-3\mu ^4 B\left(1+\frac{1}{a},b\right)}{\frac{B\left(1+\frac{3}{a},b\right)^2}{B\left(1+\frac{1}{a},b\right)}-2\mu ^2 B\left(1+\frac{3}{a},b\right)+\mu ^4 B\left(1+\frac{1}{a},b\right)}
 \end{align*}
 where, 
 $$\mu = \frac{B\left(1+\frac{2}{a},b\right)}{B\left(1+\frac{1}{a},b\right)}$$
 \subsection{Harmonic mean of SBKD}
	\begin{Theorem}
	Let $X \sim SBKD(a, b)$, then the harmonic mean of X is given by
		$$ H.M. = b \ B\left( 1+\frac{1}{a},b\right) $$
	\end{Theorem}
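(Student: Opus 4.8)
The plan is to start from the definition of the harmonic mean of a positive random variable, namely $H.M. = 1/E(1/X)$, and reduce the problem to evaluating the single expectation $E(1/X) = \int_0^1 x^{-1} f(x)\,dx$ against the SBKD density (\ref{pSBKD}). Substituting (\ref{pSBKD}) gives
\[
E\!\left(\tfrac{1}{X}\right) = \frac{1}{B\left(1+\frac{1}{a},b\right)} \int_0^1 a x^{a-1}(1-x^a)^{b-1}\,dx .
\]
First I would note that this integral is exactly the case $r=-1$ of the raw-moment formula already established in Corollary \ref{rawmom}: indeed ${\mu_{-1}}' = B\!\left(1+\frac{-1+1}{a},b\right)\big/B\!\left(1+\frac{1}{a},b\right) = B(1,b)\big/B\!\left(1+\frac{1}{a},b\right)$. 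So the whole computation collapses to identifying $B(1,b)$.

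Next I would evaluate $B(1,b) = \int_0^1 (1-x)^{b-1}\,dx = 1/b$ (equivalently, substitute $u = x^a$, $du = a x^{a-1}\,dx$, which turns the displayed integral into $\int_0^1 (1-u)^{b-1}\,du = 1/b$ directly). Hence $E(1/X) = \dfrac{1}{b\,B\left(1+\frac{1}{a},b\right)}$, and taking the reciprocal yields $H.M. = b\,B\left(1+\frac{1}{a},b\right)$, which is the claimed identity.

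The only point requiring a word of care is the finiteness of $E(1/X)$ near the endpoints: near $x=0$ the factor $x^{a-1}$ is integrable for every $a>0$ since $\int_0^1 x^{a-1}\,dx = 1/a < \infty$, and after the substitution $u=x^a$ the behaviour near $x=1$ is governed by $\int^1 (1-u)^{b-1}\,du$, which converges for every $b>0$; so the harmonic mean is well defined on the whole parameter space. Beyond this sanity check there is no real obstacle — the argument is a one-line change of variables once the integral is recognized as a beta integral, and the heaviest lifting is simply invoking Corollary \ref{rawmom} with $r=-1$.
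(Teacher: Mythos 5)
Your proof is correct and follows essentially the same route as the paper: compute $E(1/X)=\int_0^1 x^{-1}f(x)\,dx$ against the SBKD density, recognize the resulting integral (via $u=x^a$) as $B(1,b)=1/b$, and take the reciprocal. The only minor quibble is that Corollary \ref{rawmom} was derived from the m.g.f.\ expansion for nonnegative integer orders, so citing it at $r=-1$ is a slight overreach, but your direct change of variables (plus the convergence check) makes that invocation unnecessary anyway.
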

	\begin{proof}
	The harmonic mean of a \textit{r.v} $X$ is given as
	$$\frac{1}{H.M.} = \int _{-\infty}^{\infty} \frac{1}{x} f(x) dx$$
	Thus for a SBKD, the $H.M.$ is
	\begin{align*}
	\frac{1}{H.M.} & = \frac{1}{B\left( 1+\frac{1}{a},b\right)}\int _{0}^{1} \frac{1}{x} ax^a(1-x^a)^{b-1}  dx \\
	& = \frac{1}{b \ B\left( 1+\frac{1}{a},b\right)} \\
	or \ \ H.M. & = b \ B\left( 1+\frac{1}{a},b\right)
	\end{align*}
	\end{proof}		
	
\subsection{The survival and hazard function}
	The survival function of a SBKD is given by
	\begin{equation}
		S(t) = 1- F(t) = 1-I_{t^a}\left(1+\frac{1}{a}, b \right)
	\end{equation}
	
The hazard function of the SBKD is given by
$$h(t)=\frac{f(t)}{S(t)}=\frac{at^a(1-t^a)^{b-1}}{B\left( 1+\frac{1}{a},b\right)-B\left( t^a;1+\frac{1}{a},b\right)}$$

\section{Parameter estimation of SBKD}\label{Esti}
\subsection{Method of maximum likelihood estimation}

The method of maximum likelihood estimation (MLE) selects the set of values of the model parameters that maximizes the likelihood function. By definition of the method of maximum likelihood estimation, it is required to first specify the joint density function for all observations. For a random sample of size $n$ from SBKD, the likelihood function is given by
$$L = \prod _{i=1}^{n} \frac{a{x_i}^a(1-{x_i}^a)^{b-1}}{B\left( 1+\frac{1}{a},b\right)} $$
or equivalently,
\begin{align}\label{liksbkd}
ln(L) & = \sum _{i=1}^{n} ln \left(\frac{a{x_i}^a(1-{x_i}^a)^{b-1}}{B\left( 1+\frac{1}{a},b\right)}\right) \nonumber \\
	& = n \ ln(a) + a \sum _i ln(x_i) + (b-1)\sum _i ln(1-{x_i}^a) - n \ ln\ B\left( 1+\frac{1}{a},b\right) 
\end{align}
To obtain the MLE of the SBKD, (\ref{liksbkd}) is differentiated w.r.t. $a$ and $b$ and then equated to $0$. Hence the likelihood equations are

\begingroup\makeatletter\def\f@size{8}\check@mathfonts
\def\maketag@@@#1{\hbox{\m@th\normalsize\normalfont#1}}
\begin{equation}\label{mlesbkd}
\begin{aligned}
& \frac{n}{\widehat{a}}+\sum _i ln(x_i)-\widehat{a}(\widehat{b}-1)\sum _i \frac{{x_i}^{\widehat{a}-1}}{1-{x_i}^{\widehat{a}}} + \frac{n}{{\widehat{a}}^2} \left[ \psi \left(1+\frac{1}{\widehat{a}} \right)-\psi \left(1+\frac{1}{\widehat{a}}+\widehat{b} \right) \right] = 0 \\
& \sum _i ln(1-{x_i}^{\widehat{a}}) - n \left[ \psi (b)-\psi \left(1+\frac{1}{\widehat{a}}+\widehat{b} \right) \right] = 0
\end{aligned}
\end{equation} \endgroup
where, $\psi (.)$ is the digamma function given by the logarithmic derivative of the gamma function. \\
The set of equations (\ref{mlesbkd}) can be solved by using numerical methods.

\subsection{Method of quantile matching estimation}
The method of matching quantiles, an iterative procedure based on the ordinary least squares estimation (OLS) computes matching quantile estimation (MQE). The method of matching quantiles is based on matching theoretical quantiles of the parametric distribution against the empirical quantiles for specified probabilities, \cite{09Tse}. The basic idea is to match the distribution of total counterpart portfolio by that of a selected portfolio. We choose the representative portfolio to minimize the mean squared difference between the quantiles of the two distributions across all levels. This leads to the matching quantiles estimation (MQE). If $\tilde{Q_p}$ is the $p^{th}$ sample quantile, then the equality of theoretical and empirical qunatiles is expressed by 
$$Q(p_k;\theta)=\tilde{{Q}_p}_k$$
for $k=1, 2, ..., d$ with $d$, the number of parameters to be estimated. The MQE is available in the r package, “fitdistrplus” \cite{DelD15}. A numerical optimization is carried out to minimize the sum of squared differences between observed and theoretical quantiles. Thus, using the R-package, "fitdistrplus" the MQE of the SBKD can be obtained.
\section{Application to data}

\subsection{Simulation study}
It has been discussed under Subsection \ref{rngen} that a random sample of size $n$ can be generated from a SBKD using its quantile function. In this section some random samples with known parameters have been generated and the samples have been fitted to SBKD, Kumaraswamy distribution and Beta distribution respectively, by using the method of maximum likelihood estimation. The R package "fitdistrplus" has been used to obtain the MLE for the 3 distributions. The result obtained is summarized in Table \ref{simfit}.
\newpage
\begin{table}[ht]
\centering
\begin{adjustbox}{width=1\textwidth}
\small
\begin{tabular}{ccccccc}
\hline \hline
\textbf{SBKD($a$, $b$)}	& \textbf{Distribution}	& \textbf{Estimate 1}  & \textbf{Estimate 2}   & \textbf{Log-likelihood}  & \textbf{AIC}   & \textbf{BIC}\\
\hline
   		 		&SBKD	&0.8899073  &0.9473359  &18.75961   &-33.51922  &-28.30888 \\
$a=b=1$	 		&Kum	&1.903925   &0.945852   &18.75868   &-33.51735  &-28.30701 \\
		 		&Beta	&1.8919305  &0.9472815  &18.75947   &-33.51894  &-28.3086 \\
\hline
		 		&SBKD	&2.3282455  &0.7594830  &22.05229   &-40.10458  &-34.89424 \\
$a=2.3, b=0.75$	&Kum	&1.6561367  &0.7563391  &21.95013   &-39.90027  &-34.68993 \\
		 		&Beta	&1.6187877  &0.7597696  &22.01796   &-40.03593  &-34.82559 \\
\hline
		 		&SBKD	&3.055987   &2.220511   &43.08837   &-82.17674  &-76.9664  \\
$a=3, b=2$		&Kum	&3.896464   &2.278592   &43.07107   &-82.14215  &-76.93181 \\
				&Beta	&4.643706   &2.117711	&43.05      &-82.09999  &-76.88965 \\
\hline
 		 		&SBKD	&0.6176114  &1.5318908  &5.66006    &-7.320121  &-2.10978 \\
$a=0.65, b=1.6$	&Kum	&1.510307   &1.573296  	&5.652371   &-7.304742  &-2.094402 \\
		 		&Beta	&1.563505   &1.546740   &5.654174   &-7.308349  &-2.098008  \\
\hline \hline
\end{tabular}
\end{adjustbox}
\caption{MLE of the simulated datasets for SBKD, Kumaraswamy and Beta distributions}
\label{simfit}
\end{table}

Table \ref{simfit} clearly shows that, in case of simulated data from SBKD, the estimates are more closer to the actual values. The SBKD also gives a marginally better fit than the Kum and the Beta distribution in terms of the log-likelihood function. This is quite obvious as because the sample has been drawn from the SBKD. Figure \ref{se} gives a plot the standard error of the estimates, $\widehat{a}$ and $\widehat{b}$ of the simulated samples for increasing sample size. 

\begin{figure}[!htb]
\centering
\includegraphics[width=12cm]{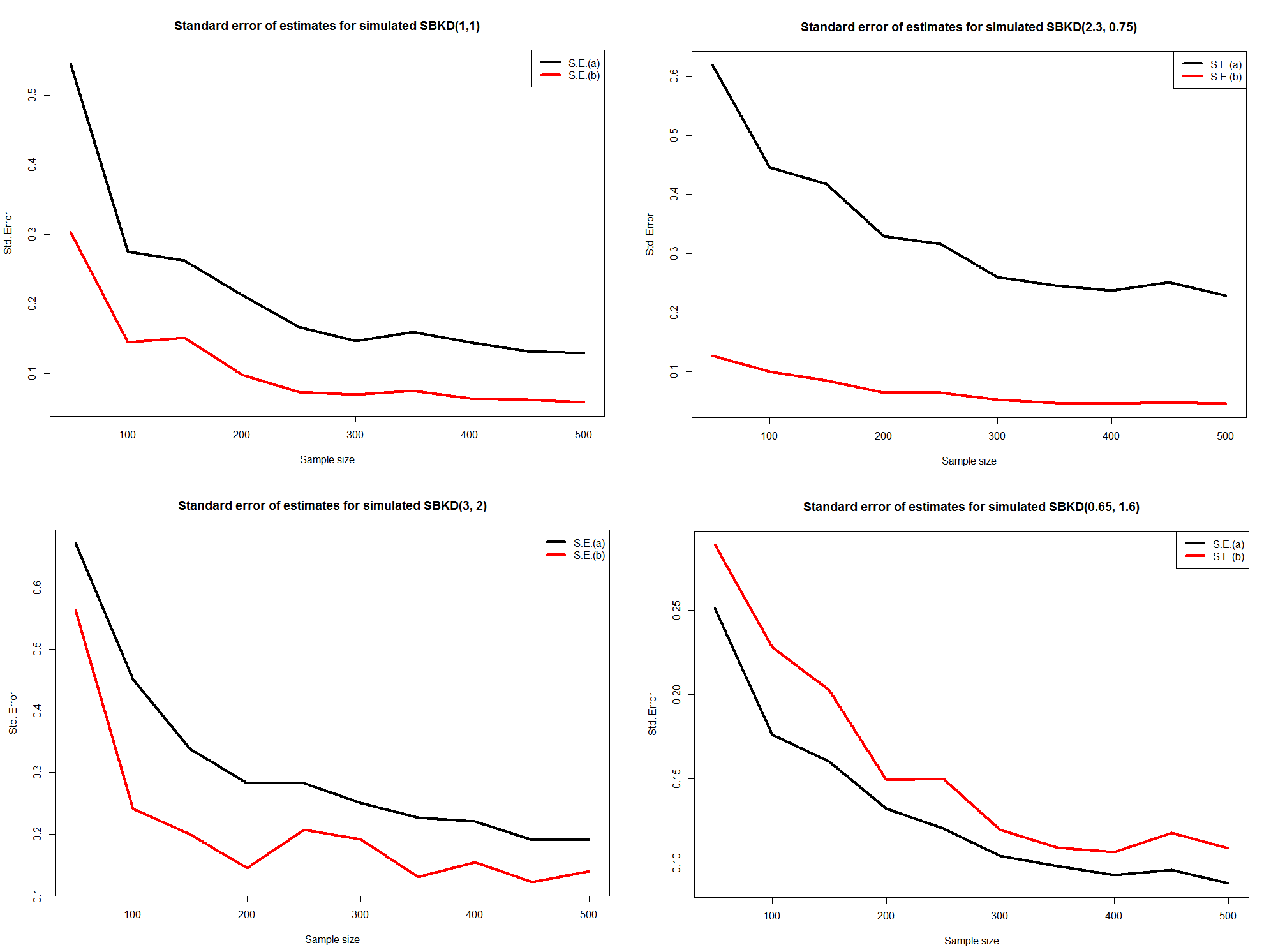}
\caption{Standard error of the estimates of $a$ and $b$ for an increasing sample size}
\label{se}
\end{figure}

Figure \ref{se} indicates that for all the simulated samples the standard error of the estimates decreases with increasing sample size. Hence the method of estimations as discussed in Section (\ref{Esti}) can be practically used to fit some real life data.

\subsection{Fitting to real life data}

In this section tensile strength data has been fitted to the size biased Kumaraswamy distribution by the method of MLE and MQE. The data is available in the R package \textit{gamlss.data} and it contains the measurements of tensile strength of 30 polyester fibres. R package \textit{fitdistrplus} has been used to obtain both the MLE and MQE. The above data fitted to the SBKD by the method of MLE and MQE is shown respectively in Figure \ref{tenmle} and \ref{tenmqe}.

\begin{figure}[!htb]
\centering
\includegraphics[width=10cm]{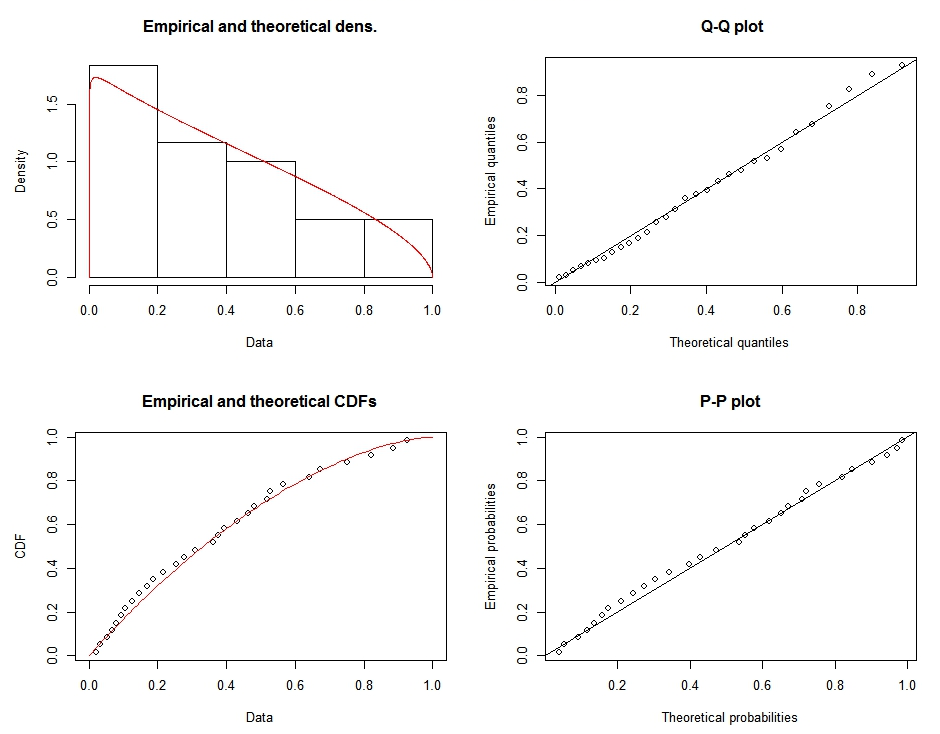}
\caption{Tensile strength data fitted to SBKD by the method of MLE}
\label{tenmle}
\end{figure}  

\newpage

\begin{figure}[!htb]
\centering
\includegraphics[width=10cm]{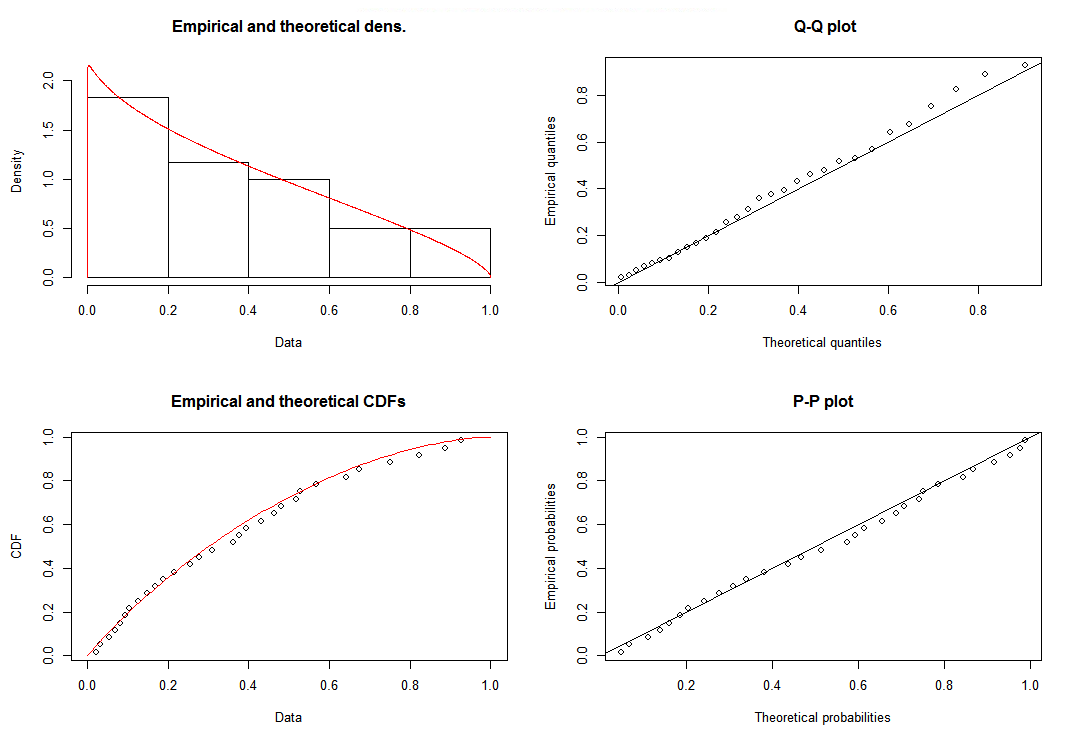}
\caption{Tensile strength data fitted to SBKD by the method of MQE}
\label{tenmqe}
\end{figure}  

The tensile data has also been fitted to the Kumaraswamy distribution and the beta distribution by the corresponding methods and the respective log-likelihood functions, the Akaike Information Criteria (AIC) and the Bayesian Information Criteria (BIC) have been obtained. The results obtained for the three distributions, viz., SBKD, Kum and Beta have been summarized in Table \ref{datatensile}.

\begin{table}[ht]
\centering
\begin{adjustbox}{width=1\textwidth}
\small
\begin{tabular}{ccccccc}
\hline \hline
\textbf{Method used}	& \textbf{Distribution}	& \textbf{Estimate 1}  & \textbf{Estimate 2}   & \textbf{Log-likelihood}  & \textbf{AIC}   & \textbf{BIC}\\
\hline
		 &SBKD	&0.1171583  &1.5864212  &3.422145  &-2.84429    &-0.04189501 \\
MLE		 &Kum	&0.9626828  &1.6082976  &3.311034  &-2.622069   &0.1803258 \\
		 &Beta	&0.9666515  &1.6204918  &3.305064  &-2.610127   &0.1922674 \\
\hline
		 &SBKD  &0.0895447  &1.6607512  &3.269999  &-2.539998   &0.2623964  \\
MQE      &Kum   &0.8160955  &1.4184062  &3.025052  &-2.050104   &0.7522909  \\
		 &Beta  &0.8026875  &1.4318374  &2.996381  &-1.992761   &0.8096336  \\
\hline \hline
\end{tabular}
\end{adjustbox}
\caption{Summary of the fitted datasets for SBKD, Kumaraswamy and Beta distributions}
\label{datatensile}
\end{table}
Table \ref{datatensile} clearly shows that in terms of the log-likelihood, the SBKD  gives a marginally better fit to the tensile strength data as compared to the Kum and Beta.

\section{Conclusions}
We have proposed size-biased version of Kumaraswamy distribution which can be employed in modeling data from hydrology, forestry and various other related fields. Special cases of the SBKD have been discussed. The structural properties including cumulative distribution function, the Quantile function, moments, and shape of the model for varying values of the parameters have been discussed and derived. Two methods for estimation of the parameters of the model viz,  MLE and MQE was studied. Using simulated data we have shown that the methods can provide reasonably good estimates of the parameters; it was shown that the standard deviations of the estimates decrease with increase in the sample size. The model has been applied to a real dataset which is indicative of potentially a better candidate than either a beta or a Kumaraswamy distribution in terms of greater likelihood.

\vspace{6pt}

\textbf{Acknowledgments: }The first author acknowledges the Department of Science and Technology (DST), Government of India for her financial support through DST-INSPIRE fellowship with award no. IF130343.

\end{document}